\title{\mytitle}
\author{Jean-Guillaume Dumas\thanks{Laboratoire J. Kuntzmann,
    Universit\'e de Grenoble. 51, rue des Math\'ematiques, umr CNRS
    5224, bp 53X, F38041 Grenoble, France,
    \href{mailto:Jean-Guillaume.Dumas@imag.fr,Dominique.Duval@imag.fr,Burak.Ekici@imag.fr}{\{Jean-Guillaume.Dumas,Dominique.Duval,Burak.Ekici\}@imag.fr}.}
  \and Dominique Duval\footnotemark[1]
  \and
  Burak Ekici\footnotemark[1]
  \and
  Jean-Claude~Reynaud\thanks{Reynaud Consulting (RC), \href{mailto:Jean-Claude.Reynaud@imag.fr}{Jean-Claude.Reynaud@imag.fr}.}
}
\theoremstyle{plain} 
\newtheorem{theorem}{Theorem}[section]
\newtheorem{lemma}[theorem]{Lemma}
\theoremstyle{definition}
\newtheorem{definition}[theorem]{Definition}
\theoremstyle{remark}
\newtheorem{remark}[theorem]{Remark}
\begin{document}
\maketitle

\begin{abstract} 
Exception handling is provided by most modern programming languages.
It allows to deal with anomalous or exceptional events 
which require special processing. 
In computer algebra, exception handling is an efficient way to implement 
the dynamic evaluation paradigm: 
for instance, in linear algebra, dynamic evaluation can be used 
for applying programs which have been written for matrices 
with coefficients in a field to matrices with coefficients in a ring. 
Thus, a proof system for computer algebra should include 
a treatement of exceptions, 
which must rely on a careful description of a semantics of exceptions. 
The categorical notion of monad 
can be used for formalizing the raising of exceptions: 
this has been proposed by Moggi and implemented in Haskell. 
In this paper, we provide a proof system for exceptions 
which involves both raising and handling, by extending Moggi's approach. 
Moreover, the core part of this proof system is dual 
to a proof system for side effects in imperative languages, 
which relies on the categorical notion of comonad. 
Both proof systems are implemented in the Coq proof assistant.  
\end{abstract}

\section{Introduction}

Using exceptions is an efficient way to simultaneously compute 
with dynamic evaluation in exact linear algebra. 
For instance, for computing the rank of a matrix, 
a program written for matrices with coefficients in a field 
can easily be modified by adding an exception treatment 
in order to be applied to a matrix with coefficients in a ring: 
the exception mechanism provides an automatic case distinction 
whenever it is required. 
 
The question that arises then is how to prove correctness of the new algorithm.
It would be nice to design proofs with two levels, as we designed the
algorithms: a first level without exceptions, then a second level taking the
exceptions into account.  
In this paper, we propose a proof system following this principle.

Exceptions form a \emph{computational effect}, 
in the sense that a syntactic expression $f:X\to Y$ 
is not always interpreted as a function $\deno{f}:\deno{X} \to \deno{Y}$
(where, as usual, the sets $\deno{X}$ and $\deno{Y}$
denote the interpretations of the types $X$ and $Y$). 
For instance a function which raises an exception 
can be interpreted as a function $\deno{f}:\deno{X} \to \deno{Y}+\Exc$ 
where $\Exc$ is the set of exceptions
and ``$+$'' denotes the disjoint union. 
In a programming language, exceptions usually differ from errors in
the sense that it is possible to recover from an exception while this
is impossible for an error; thus, exceptions have to be both raised
and handled.

The fundamental computational effect is the evolution of states in 
an imperative language, when seen from a functional point of view. 
There are several frameworks for 
combining functional and imperative programming: 
the effect systems classify the expressions according to 
the way they interact with the store~\cite{Lucassen88polymorphiceffect},
while the Kleisli category of the monad of states 
$(X\times \St)^\St$ (where $\St$ is the set of states) 
also provides a classification of expressions~\cite{Mo91};
indeed, both approaches are related~\cite{Wadler98themarriage}.
Lawvere theories were proposed for dealing with 
the operations and equations related to computational effects~\cite{PP02,HP07}.
Another related approach, based on the fact that the state is observed, 
relies on the classification of expressions provided by
the coKleisli category of comonad of states $X\times \St$ 
and its associated Kleisli-on-coKleisli category~\cite{DDFR12-state}.

The treatment of exceptions is another fundamental computational effect. 
It can be studied from the point of view of the monad of 
exceptions $X+\Exc$ (where $\Exc$ is the set of exceptions),
or with a Lawvere theory, however in these settings it is difficult to 
handle exceptions because this operation 
does not have the required algebraicity property~\cite{PP03,PP09}. 
This issue has been circumvented in~\cite{SM04} in order to get a Hoare logic 
for exceptions, 
in~\cite{Le06} by using both algebras and coalgebras. 
and in~\cite{PP09} by introducing handlers.
The formalization of exceptions can also be made from 
a coalgebraic point of view~\cite{Jacobs:2001:javaexc}. 
In this paper we extend Moggi's original approach: 
we use the classification of expressions provided by
the Kleisli category of the monad of exceptions 
and its associated coKleisli-on-Kleisli category;
moreover, we use the duality between states and exceptions 
discovered in~\cite{DDFR12-dual}.
However, it is not necessary to know about 
monads or comonads for reading this paper: 
the definitions and results are presented in an elementary way, 
in terms of equational theories.

In Section~\ref{sec:dynev} we give a motivating example 
for the use of exceptions as an efficient way to compute 
with dynamic evaluation in exact linear algebra.
 
Then in Section~\ref{sec:syn} we define the syntax of a simple language 
for dealing with exceptions.
 
The intended denotational semantics is described in Section~\ref{sec:den}:
we dissociate the core operations for switching 
between exceptions and non-exceptional values, on one side, 
from their encapsulation in the raising and handling operations,
on the other side.  
 
In Section~\ref{sec:deco} we add decorations to the syntax, 
in order to classify the expressions of the language according to 
their interaction with the exceptions.
Decorations extend the syntax much like compiler qualifiers or specifiers (e.g.,
like the \texttt{throw} annotation in C++ functions' signatures).
In addition, we also decorate the equations, 
which provides a proof system for dealing with this decorated syntax.
The main result of the paper is that this proof system 
is sound with respect to the semantics of exceptions 
(Theorem~\ref{theo:rules-sound}). 
A major property of this new proof system is that 
we can separate the verification of the proofs in two steps:
a first step checks properties of the programs  
whenever there is no effect, 
while a second step takes the effects into account via the decorations. 
Several properties of exceptions have been proven using this proof system,
and these proofs have been verified in the Coq proof assistant.

\section{Rank computations modulo composite numbers}\label{sec:dynev}

Rank algorithms play a fundamental role in computer algebra. 
For instance, computing homology groups of
simplicial complexes reduces to computing ranks and integer
Smith normal forms of the associated boundary
matrices~\cite{jgd:2003:GAP}.
One of the most efficient method for computing the Smith normal form
of such boundary matrices also reduces to computing ranks but
modulo the valence, a product of the primes involved in the Smith
form~\cite{jgd:2001:JSC}.
Now rank algorithms (mostly Gaussian elimination and Krylov based
methods) work well over fields (note that Gaussian elimination can be adapted
modulo powers of primes~\cite[\S 5.1]{jgd:2001:JSC}).
Modulo composite numbers, zero divisors arise. Gauss-Bareiss method
could be used but would require to know the determinant in advance,
with is more difficult than the valence.
The strategy used in \cite{jgd:2001:JSC} is to factor the valence, but
only partially (factoring is still not polynomial).
The large remaining composite factors will have very few zero divisors
and thus Gaussian elimination or Krylov methods will have very few
 
risks of falling upon them. Thus one can use dynamic evaluation: 
try to launch the rank algorithm modulo this composite number with large prime
factors and ``pretend'' to be working over a field~\cite{Duval:1989:evdyn}. 
In any case, if a zero divisor is encoutered, then the valence has
been further factored (in polynomial time!) and the algorithm can split
its computation independently modulo both factors. 
 
An effective algorithm design, here in C++, enabling this dynamic
evaluation with very little code modification, uses exceptions:
\begin{enumerate}
\item Add one exception at the arithmetic level, for signaling a
  tentative division by a zero divisor, see Fig.~\ref{fig:invmod}.
\begin{figure}[ht]
\begin{jgdfrsh}\input{invmodexc.cpp.tex}\end{jgdfrsh}
\caption{Throwing an exception upon division by a non unit}\label{fig:invmod}
\end{figure}
\item Catch this exception inside the rank algorithm and throw a new
  exception with the state of the rank iteration, see
  Fig.~\ref{fig:currentrank}
  (in our implementation the class \texttt{zmz} wraps integers modulo $m$, 
  the latter modulus being a \texttt{static} global variable).
\begin{figure}[ht]
\begin{jgdfrsh}\input{currentrank.cpp.tex}\end{jgdfrsh}
\caption{Re-throwing an exception to forward rank and iteration number
  information}\label{fig:currentrank}
\end{figure}
\item Then it is sufficient to wrap the rank algorithm with a
  dynamic evaluation, splitting the continuation modulo both factors, 
  see Fig.~\ref{fig:dyneval}.
\end{enumerate}
\begin{figure}[ht]
\begin{jgdfrsh}\input{dynevalgauss.cpp.tex}\end{jgdfrsh}
\caption{Recursive splitting wrapper around classical Gaussian elimination,
  packing a list of pairs of rank and associated modulus.}\label{fig:dyneval}
\end{figure}
The advantage of using exceptions over other software design is
twofold:

first, throwing an exception at the arithmetic level and not only on
  the inverse call in the rank algorithm allows to prevent that other
  unexpected divisions by zero divisors go unnoticed;
 
second, re-throwing a new exception in the rank algorithm allows to keep
  its specifications unchanged. It also enables to keep the
  modifications of rank algorithms to a minimum and to clearly separate
  normal behavior with primes from the exceptional handling of
  splitting composite moduli.

In the following, we propose a proof system with decorations, so that
proofs can easily be made in two steps: 
a first step without exceptions, that is, 
just preserving an initial proof of the rank algorithm;
then a second level taking the exceptions into account.

\section{Syntax for exceptions}\label{sec:syn}
The syntax for exceptions in computer languages depends on the language:  
the keywords for raising exceptions may be either 
\texttt{raise} or \texttt{throw}, 
and for handling exceptions they may be either 
\texttt{handle}, \texttt{try-with}, \texttt{try-except} or \texttt{try-catch},
for instance. 
In this paper we rather use \texttt{throw} and \texttt{try-catch}. 
 
The syntax for dealing with exceptions may be described in two parts: 
a basic part which deals with the basic data types 
and an exceptional part for raising and handling exceptions.

The \emph{basic} part of the syntax is a signature $\Sig_\basic$,
made of a \emph{types} (or  \emph{sorts}) and \emph{operations}.

The \emph{exceptional types} form a subset $\exctype$ 
of the set of types of $\Sig_\basic$. 
For instance in C++ any type (basic type or class) is an exceptional type, 
while in Java
 
there is a base class for exceptional types, 
such that the exceptional types are precisely the subtypes of this base class.

Now, we assume that some basic signature $\Sig_{\basic}$ 
and some set of exceptional types $\exctype$ have been chosen.

The signature $\Sig_\exc$ for exceptions is made of $\Sig_{\basic}$
 
together with the operations for raising and handling exceptions, as follows. 

\begin{definition}
\label{defi:syn-sig}

The \emph{signature for exceptions} $\Sig_\exc$ 
is made of $\Sig_{\basic}$ with 
 
the following operations: 
a \emph{raising} operation 
for each exceptional type $T$ and each type $Y$:
  $$ \throw{T}{Y} :T\to Y \;,$$ 
and a \emph{handling} operation 
for each $\Sig_\exc$-term $f:X\to Y$, 
each non-empty list of exceptional types $(T_i)_{1\leq i\leq n}$ 
and each family of $\Sig_\exc$-terms $(g_i:T_i\to Y)_{1\leq i\leq n}$:
  $$ \try{f}{\catchi} : X \to Y \;.$$ 
\end{definition}

An important, and somewhat surprising, feature of a language with exceptions 
is that all expressions in the language,
including the \emph{try-catch} expressions, propagate exceptions.
Indeed, if an exception is raised before some \emph{try-catch} expression 
is evaluated, this exception is propagated. 
In fact, the \emph{catch} block in a \emph{try-catch} expression 
may recover from exceptions which are raised inside the \emph{try} block, 
but the \emph{catch} block alone is not an expression of the language.  
 
This means that the operations for catching exceptions 
are \emph{private} operations: they are not part of the signature 
for exceptions. 
More precisely, the operations for raising and handling exceptions 
can be expressed in terms of a private \emph{empty type}
and two families of private operations: 
the \emph{tagging} operations for creating exceptions    
and the \emph{untagging} operations for catching them 
(inside the \emph{catch} block of any \emph{try-catch} expression).  
The tagging and untagging operations are called the \emph{core operations} 
for exceptions.
They are not part of $\Sig_\exc$, but the interpretations of the 
operations for raising and handling exceptions, which are part of $\Sig_\exc$, 
are defined in terms of the interpretations of the core operations.
The meaning of the core operations is given in Section~\ref{sec:den}.

\begin{definition}
\label{defi:syn-core}

Let $\Sig_\exc$ be the signature for exceptions.
The \emph{core} of $\Sig_\exc$ is the signature $\Sig_\core$ 
made of $\Sig_{\basic}$ with 
a type $\empt$ called the \emph{empty type} 
and two operations for each exceptional type $T$:  
$$ \tagg{T}:T\to\empt \;\;\mbox{ and }\;\; \untag{T}:\empt\to T $$ 
where $\tagg{T}$ is called 
the \emph{exception constructor} or the \emph{tagging} operation 
and $\untag{T}$ is called 
the \emph{exception recovery} or the \emph{untagging} operation.
\end{definition}

\section{Denotational semantics for exceptions}\label{sec:den}

In this Section we define a denotational semantics for exceptions 
which relies on the common semantics of exceptions in various languages, 
for instance in C++~\cite[Ch.~15]{cpp}, Java~\cite[Ch.~14]{java} 
or ML. 

The basic part of the syntax is interpreted in the usual way: 
each type $X$ is interpreted as a set~$\deno{X}$ 
and each operation $f:X\to Y$ of $\Sig_\basic$ as 
a function $\deno{f}:\deno{X} \to \deno{Y}$.

But, when $f:X\to Y$ in $\Sig_\exc$ is a raising or handling operation,
or when $f:X\to Y$ in $\Sig_\core$ is a tagging or untagging operation,
it is not interpreted as a function $\deno{f}:\deno{X} \to \deno{Y}$: 
this corresponds to the fact that the exceptions form a 
\emph{computational effect}.

The distinction between ordinary and exceptional values 
is discussed in Subsection~\ref{subsec:den-sem}.
Then, denotational semantics of raising and handling exceptions are
considered in Subsections~\ref{subsec:den-tag} and~\ref{subsec:den-untag},
respectively. 
We assume that some interpretation of $\Sig_{\basic}$ has been chosen.

\subsection{Ordinary values and exceptional values}\label{subsec:den-sem}
 
In order to express the denotational semantics of exceptions, 
a fundamental point is to distinguish between two kinds of values:
the ordinary (or non-exceptional) values and the exceptions.
It follows that the operations may be classified 
according to the way they may, or may not, interchange 
these two kinds of values: 
an ordinary value may be \emph{tagged} for constructing an exception, 
and later on the tag may be cleared in order to recover the value; 
then we say that the exception gets \emph{untagged}. 
 
Let $\Exc$ be a set, called the \emph{set of exceptions}.
 
For each set $A$, the set $A+\Exc$ is the disjoint union of $A$ and $\Exc$ 
and the canonical inclusions are denoted 
$\inn_A: A \to A+\Exc $ and $ \ina_A: \Exc \to A+\Exc$. 
For each functions $f:A\to B$ and $g:\Exc \to B$,
we denote by $\cotuple{f|g}:A+\Exc\to B$ the unique function 
such that $\cotuple{f|g}\circ \inn_A =f$ and $\cotuple{f|g}\circ \ina_A =g$.

\begin{definition}
\label{defi:exc} 
An element of $A+\Exc$ is 
an \emph{ordinary value} if it is in $A$ 
and an \emph{exceptional value} if it is in $\Exc$.
 
A function $\varphi:A+\Exc \to B+\Exc$: 
\begin{itemize}
 
\item \emph{raises an exception} 
if there is some $x\in A$ such that $\varphi(x)\in \Exc$.
\item \emph{recovers from an exception} 
if there is some $e\in \Exc$ such that $\varphi(e)\in B$. 
\item \emph{propagates exceptions} 
if $\varphi(e)=e$ for every $e\in \Exc$.
\end{itemize}
\end{definition}
 
Clearly, a function $\varphi:A+\Exc \to B+\Exc$ which propagates exceptions
may raise an exception, but cannot recover from an exception. 
Such a function $\varphi$ is characterized by its restriction 
$\varphi_{|_A}:A \to B+\Exc$, 
since its restriction on exceptions $\varphi_{|_\Exc}:\Exc \to B+\Exc$
is the inclusion $\ina_B$ of $\Exc$ in $B+\Exc$.

In the denotational semantics for exceptions, we will see that 
a term $f:X\to Y$ of $\Sig_\exc$ or $\Sig_\core$ 
may be interpreted either as 
a function $\deno{f}:\deno{X}\to \deno{Y}$ or as 
a function $\deno{f}:\deno{X}\to \deno{Y}+\Exc$ or as 
a function $\deno{f}:\deno{X}+\Exc\to \deno{Y}+\Exc$. 
However, in all cases, it is possible to convert $\deno{f}$ 
to a function from $\deno{X}+\Exc$ to $\deno{Y}+\Exc$,
as follows. 
 
\begin{definition}
\label{defi:upcast}
The \emph{upcasting conversions} are the following transformations:
\begin{itemize}
\item every function $\varphi:A\to B$ gives rise to 
$ \upcastone{\varphi}=\inn_{B}\circ\varphi:A\to B+\Exc $,
\item every function $\psi:A\to B+\Exc$ gives rise to 
$ \upcasttwo{\psi}=\cotuple{\psi|\ina_B}:A+\Exc \to B+\Exc $, 
which is equal to $\psi$ on $A$ and which propagates exceptions; 
\item it follows that every function $\varphi:A\to B$ gives rise to 
$\upcastboth{\varphi}=\upcasttwo{(\upcastone{\varphi})}
=\cotuple{\inn_{B}\circ\varphi|\ina_B}
= \varphi+\id_{\Exc}:A+\Exc \to B+\Exc$, 
which is equal to $\varphi$ on $A$ and which propagates exceptions.
\end{itemize}
\end{definition}
Since the upcasting conversions are \emph{safe} (i.e., injective), 
when there is no ambiguity the symbols $\upcastone$, $\upcasttwo$  
and $\upcastboth$ may be omitted. 
 
In this way, for each $f:X\to Y$ and $g:Y\to Z$, whatever their effects, 
we get $\deno{f}:\deno{X}+\Exc\to\deno{Y}+\Exc$
and $\deno{g}:\deno{Y}+\Exc\to\deno{Z}+\Exc$, which can be composed. 
 
Thus, every term of $\Sig_\exc$ and $\Sig_\core$ can be interpreted 
by first converting the interpretation of each of its components $f:X\to Y$ 
to a function $\deno{f}:\deno{X}+\Exc \to \deno{Y}+\Exc$. 
 
For $\Sig_\exc$, this coincides with the \emph{Kleisli composition} associated 
to the \emph{exception monad} $A+\Exc$~\cite{Mo91}. 

We will also use the following conversion. 
 
\begin{definition}
\label{defi:downcast}
The \emph{downcasting} conversion is the following transformation:
\begin{itemize}
\item every function $\theta:A+\Exc \to B+\Exc$ gives rise to 
$ \downcast{\theta}=\theta\circ\inn_{A}:A\to B+\Exc $
which is equal to $\theta$ on $A$ and which propagates exceptions. 
\end{itemize}
\end{definition}
This conversion is unsafe: different $\theta$'s may
give rise to the same $\downcast{\theta}$.

\subsection{Tagging and raising exceptions}\label{subsec:den-tag}

Raising exceptions relies on the interpretation of the tagging operations.
The interpretation of the empty type $\empt$ is the empty set 
$\emptyset$; thus, for each type $X$ the interpretation of $\empt+X$ 
can be identified with $\deno{X}$.
\begin{definition}
\label{defi:den-tag} 
Let $\Exc$ be the disjoint union of the sets $\deno{T}$ 
for all the exceptional types~$T$. 
Then, for each exceptional type $T$, the interpretation of the tagging 
operation $\tagg{T}:T\to\empt$ is the 
coprojection function 
  $$\deno{\tagg{T}}:\deno{T}\to\Exc \;.$$
\end{definition}
 
Thus, the tagging function $\deno{\tagg{T}}:\deno{T}\to\Exc$
maps a non-exceptional value (or \emph{parameter}) $a\in\deno{T}$  
to an exception $\deno{\tagg{T}}(a)\in\Exc$.

We can now define the raising of exceptions 
in a programming language. 
\begin{definition}
\label{defi:den-raise}
For each exceptional type $T$ and each type $Y$, 
the interpretation of the raising operation $ \throw{T}{Y} $ 
is the tagging function $\deno{\tagg{T}}$ followed by 
the inclusion of $\Exc$ in $\deno{Y}+\Exc$: 
  $$ \deno{\throw{T}{Y}} = \ina_{\deno{Y}} \circ \deno{\tagg{T}} : 
  \deno{T} \to \deno{Y}+\Exc \;.$$
\end{definition}

\subsection{Untagging and handling exceptions}\label{subsec:den-untag}

Handling exceptions relies on the interpretation of the untagging operations 
for clearing the exception tags. 
\begin{definition}
\label{defi:den-untag}
For each exceptional type $T$, the interpretation of the untagging 
operation $\untag{T}:\empt\to T$ is the function
  $$ \deno{\untag{T}}:\Exc\to \deno{T}+\Exc \;, $$
which satisfies for each exceptional type $R$:

$$
\deno{\untag{T}}\circ\deno{\tagg{R}} = 
\begin{cases}
\inn_{\deno{T}} & \text{~when~} R= T \\
\ina_{\deno{T}}\circ\deno{\tagg{R}} & \text{~when~} R\ne T \\ 
\end{cases}
  \;\; : \deno{R}\to \deno{T}+\Exc. 
$$
\end{definition}
 
Thus, the untagging function $\deno{\untag{T}}$,
when applied to any exception $e$, 
first tests whether $e$ is in $\deno{T}$; 
if this is the case, 
then it returns the parameter $a\in \deno{T}$ such that $e=\deno{\tagg{T}}(a)$, 
otherwise it propagates the exception~$e$. 
 
Since the domain of $\deno{\untag{T}}$ is $\Exc$, 
$\deno{\untag{T}}$ is uniquely determined by its restrictions 
to all the exceptional types, 
and therefore by the equalities in Definition~\ref{defi:den-untag}.

For handling exceptions of types $T_1,\ldots T_n$, raised by the interpretation
of some term $f:X\to Y$ of $\Sig_\exc$, one provides for each $i$ in
$\{1,\dots,n\}$ a term $g_i:T_i \to Y$ of $\Sig_\exc$ 
(thus, the interpretation of $g_i$ may itself raise exceptions). 
Then the handling process builds a function 
which first executes $f$, 
and if $f$ returns an exception then maybe catches this exception. 
The catching part encapsulates some untagging functions,
but the resulting function always propagates exceptions. 
\begin{definition}
\label{defi:den-handle}
For each term $f:X\to Y$ of $\Sig_\exc$, 
and each non-empty lists $(T_i)_{1\leq i\leq n}$ of exceptional types
and $(g_i:T_i\to Y)_{1\leq i\leq n}$ of terms of $\Sig_\exc$, 
let $(\recov_i)_{1\leq i\leq n}$ 
denote the family of functions defined recursively by: 
  $$ \recov_i \;=\; \begin{cases} 
    \cotuple{\; \deno{g_n} \;|\; \ina_Y \;} \circ \untag{T_n} & 
       \mbox{ when } i=n \\
    \cotuple{\; \deno{g_i} \;|\; \recov_{i+1} \;} \circ \untag{T_i} & 
       \mbox{ when } i< n \\
  \end{cases} 
  \quad : \Exc \to Y+\Exc $$

Then the interpretation of the handling operation is:
$$\deno{ \try{f}{\catchi}} =  
\cotuple{\; \inn_Y \;|\; \recov_1 \;} \circ \deno{f}.$$
\end{definition}
It should be noted that $\deno{f}:\deno{X}\to \deno{Y}+\Exc $ 
and that similarly $\deno{ \try{f}{\catchi}} : \deno{X}\to \deno{Y}+\Exc $. 
 
When $n=1$ we get: 
$$\deno{ \try{f}{\catch{T}{g}} } =  
  \cotuple{\; \inn_Y \;|\; \cotuple{\; \deno{g} \;|\; \ina_Y \;} 
  \circ \untag{T} \;} \circ \deno{f} \;.$$

This definition matches that of Java exceptions~\cite[Ch.~14]{java} 
or C++ exceptions~\cite[\S 15]{cpp}.

In particular, in the interpretation of $\try{f}{\catchi}$, 
each function $\deno{g_i}$ may itself raise exceptions; 
 
and the types $T_1,\dots,T_n$ need not be pairwise distinct,
but if $T_i=T_j$ for some $i<j$ then $g_j$ is never executed.

\section{A decorated equational proof system for exceptions}
 
\label{sec:deco}

In Sections~\ref{sec:syn} and~\ref{sec:den} we have formalized 
the signature for exceptions $\Sig_{\exc}$, its associated 
core signature $\Sig_{\core}$, 
and we have described their denotational semantics.
However the soundness property is not satisfied, in the sense that 
the denotational semantics is not a model of the signature,
in the usual sense: 
indeed, a term $f:X \to Y$ is not always interpreted 
as a function $\deno{f}:\deno{X}\to \deno{Y}$;
it may be interpreted as $\deno{f}:\deno{X} \to \deno{Y}+\Exc$, 
or as $\deno{f}:\deno{X}+\Exc \to \deno{Y}+\Exc$.
 
In order to get soundness, in this Section 
we add \emph{decorations} to the signature for exceptions 
by classifying the operations and equations 
according to the interaction of their interpretations 
with the mechanism of exceptions.
$$\xymatrix@C=5pc@R=-.3pc{
\txt{Signature } \ar@{~>}[r]^(.4){\txt{\it decoration}\;\;} & 
\txt{ Decorated signature } 
  \ar@{~>}[r]^(.55){\;\;\txt{\it interpretation}}_{\;\;\txt{(sound)}} & 
\txt{ Semantics} \\ 
\txt{(Section~\ref{sec:syn})} &
\txt{(Section~\ref{sec:deco})} &
\txt{(Section~\ref{sec:den})} \\
}$$

\subsection{Decorations for exceptions}\label{subsec:deco-deco}

By looking at the interpretation (in Section~\ref{sec:den})
of the syntax for exceptions (from Section~\ref{sec:syn}),  
we get a classification of the operations and terms in three parts, 
depending on their interaction with the exceptions mechanism.
 
The terms are decorated by $\pure$, $\ppg$ and $\ctc$ used as
superscripts, they are called respectively 
\emph{pure} terms, \emph{propagators} and \emph{catchers},
according to their interpretations:
 \begin{description}
\item[$\pure$] the interpretation of a \emph{pure} term may neither raise
  exceptions nor recover from exceptions,
\item[$\ppg$] the interpretation of a \emph{propagator} may raise exceptions but
  is not allowed to recover from exceptions, 
\item[$\ctc$] the interpretation of a \emph{catcher} may raise exceptions and
  recover from exceptions.
\end{description}
For instance, the decoration~$\pure$ corresponds to the decoration
\texttt{noexcept} in C++ (replacement of the deprecated
\texttt{throw()}) and the decoration~$\ppg$ corresponds to
\texttt{throw(...)}, still in C++. 
The decoration~$\ctc$ is usually  {\em not} encountered in the
language, since catching is the prerogative of the \emph{core}
untagging function, which is private.

Similarly, we introduce two kinds of equations between terms.
This is done by using two distinct relational symbols
which correspond to two distinct interpretations:
\begin{description}
\item[$(\eqs)$] a \emph{strong} equation is an equality of functions 
both on ordinary values and on exceptions
\item[$(\eqw)$]  a \emph{weak} equation is an equality of functions 
only on ordinary values, but maybe {\em not on exceptions}. 
\end{description}
 
The interpretation of these three kinds of terms and two kinds of equations
is summarized in Fig.~\ref{fig:expansion}.

It has been shown in Section~\ref{subsec:den-sem} that 
any propagator can be seen as a catcher 
and that any pure term can be seen as a propagator and thus also as a catcher.
This allows to compose terms of different nature,

so that it is not a restriction to give the interpretation 
of the decorated equations only when both members are catchers.

\begin{figure}[!ht]
\renewcommand{\arraystretch}{1.2}   
$$\begin{array}{|c|c|c|} 
\hline 
\mbox{Syntax} & 
\mbox{Decorated syntax} & 
\mbox{Interpretation}  \\ 
\hline 
\xymatrix@R=.5pc{\mbox{type}\\} & 
\xymatrix@R=.5pc{\mbox{type}\\} & \\
\xymatrix@R=.5pc{ X \\ } & 
\xymatrix@R=.5pc{ X \\ } & 
\xymatrix@R=.5pc{ \deno{X} \\ } \\
\hline 
\xymatrix@R=.5pc{\mbox{term}\\} & 
\xymatrix@R=.5pc{\mbox{pure term}\\} & \\
\xymatrix@R=.5pc{ X \ar[r]^{f} & Y \\ } & 
\xymatrix@R=.5pc{ X \ar[r]^{f^\pure} & Y \\ } & 
\xymatrix@R=.5pc@C=4pc{
  \deno{X} \ar[r]^{\deno{f}} & \deno{Y} \\ } \\
\hline 
\xymatrix@R=.5pc{\mbox{term}\\} & 
\xymatrix@R=.5pc{\mbox{propagator}\\} & \\
\xymatrix@R=.5pc{ X \ar[r]^{f} & Y \\ } & 
\xymatrix@R=.5pc{ X \ar[r]^{f^\ppg} & Y \\ } & 
\xymatrix@R=.5pc@C=4pc{
  \deno{X} \ar[r]^(.45){\deno{f}} & \deno{Y}+\Exc \\ } \\
\hline 
\xymatrix@R=.5pc{\mbox{term}\\} & 
\xymatrix@R=.5pc{\mbox{catcher}\\} & \\
\xymatrix@R=.5pc{ X \ar[r]^{f} & Y \\ } & 
\xymatrix@R=.5pc{ X \ar[r]^{f^\ctc} & Y \\ } & 
\xymatrix@R=.5pc@C=4pc{ 
  \deno{X}+\Exc \ar[r]^{\deno{f}} & \deno{Y}+\Exc \\ } \\
\hline 
\mbox{equation} & 
\mbox{strong equation} & \\
\begin{array}{l} f = g : X \to Y \end{array} &
\begin{array}{l} f^\ctc \eqs g^\ctc : X \to Y \end{array} &
\;\; \deno{f} = \deno{g} \\
\hline 
\mbox{equation} & 
\mbox{weak equation} & \\
\begin{array}{l} f = g : X \to Y \end{array} &
\begin{array}{l} f^\ctc \eqw g^\ctc :  X \to Y \end{array} &
\;\; \deno{f}\;\circ\;\inn_{\deno{X}} = \deno{g}\;\circ\;\inn_{\deno{X}} \\
\hline 
\end{array}$$
\renewcommand{\arraystretch}{1}
\caption{Interpretation of the decorated syntax.} 
\label{fig:expansion} 
\end{figure}

Now we can add decorations to the signature for exceptions $\Sig_\exc$
and its associated core signature $\Sig_\core$,
from Definitions~\ref{defi:syn-sig} and~\ref{defi:syn-core}.
 
\begin{definition}
\label{defi:deco-sig}

The \emph{decorated signature for exceptions} $\Sig_\exc^\deco$ 
and its  associated \emph{decorated core signature} $\Sig_\core^\deco$
are made of $\Sig_\exc$ and $\Sig_\core$, respectively, decorated as follows: 
the basic operations are pure, 
the tagging, raising and handling operations are propagators,
and the untagging operations are catchers. 
\end{definition}

\subsection{Decorated rules for exceptions}\label{subsec:deco-rules}

In this Section we define an equational proof system for exceptions. 
This proof system is made of the rules in Fig.~\ref{fig:rules}.
It can be used for

proving properties of exceptions, for instance in the Coq proof assistant.
 
In Fig.~\ref{fig:rules}, 
the decoration properties are often grouped
with other properties: for instance, ``$f^\ppg \eqw g^\ppg$''
means ``$f^\ppg$ and $g^\ppg$ and $f \eqw g$''; 
in addition, the decoration $\ctc$ is usually dropped, since the rules
assert that every term can be seen as a catcher;
and several rules with the same premisses may be grouped together:
$\frac{H_1\dots H_n}{C_1\dots C_p}$ stands for 
$\frac{H_1\dots H_n}{C_1}$,\dots,$\frac{H_1\dots H_n}{C_p}$.

\begin{figure}[H]  \center 
\renewcommand{\arraystretch}{2.3}   
$$ \begin{array}{|c|} 
\hline 
\multicolumn{1}{|l|}
  {\text{\firstmonadrule Monadic equational rules for exceptions (first part):} } \\ 
 
\; \dfrac{f:X\to Y \squad g:Y\to Z}{g\circ f:X\to Z} \quad 
 
\dfrac{X}{\id_X:X\to X } \quad 
 
\dfrac{f}{f \eqs f} \quad 
\dfrac{f \eqs g}{g \eqs f} \quad 
\dfrac{f \eqs g \squad g \eqs h}{f \eqs h} \; \\
 
\dfrac{f:X\to Y \quad g_1\eqs g_2:Y\to Z}
  {g_1\circ f \eqs g_2\circ f :X\to Z}  \qquad
\dfrac{f_1\eqs f_2:X\to Y \quad g:Y\to Z}
  {g\circ f_1 \eqs g\circ f_2 :X\to Z} \\ 
 
\dfrac{f:X\to Y \quad g:Y\to Z \quad h:Z\to W}
  {h\circ (g\circ f) \eqs (h\circ g)\circ f} \qquad 
\dfrac{f:X\to Y}{f\circ \id_X \eqs f} \qquad 
\dfrac{f:X\to Y}{\id_Y\circ f \eqs f} \\ 
\hline 
\multicolumn{1}{|l|}
  {\text{\seconmonadrule Monadic equational rules for exceptions (second part):} } \\ 
 
\;\; \dfrac{f^\pure}{f^\ppg} \quad 
 
\dfrac{f^\ppg}{f^\ctc} \quad 
 
\dfrac{X}{\id_X^\pure } \qquad 
 
\dfrac{f^\pure \quad g^\pure}{(g\circ f)^\pure}  \quad 
 
\dfrac{f^\ppg \quad g^\ppg}{(g\circ f)^\ppg} \qquad 
 
\dfrac{f^\ppg \eqw g^\ppg}{f \eqs g} \quad
 
\dfrac{f \eqs g}{f \eqw g} \;\; \\
 
\dfrac{f}{f \eqw f} \qquad  
\dfrac{f \eqw g}{g \eqw f} \qquad 
\dfrac{f \eqw g \quad g \eqw h}{f \eqw h}  \\
 
\dfrac{f^\pure:X\to Y \quad g_1\eqw g_2:Y\to Z}
  {g_1\circ f \eqw g_2\circ f }  \qquad
\dfrac{f_1\eqw f_2:X\to Y \quad g:Y\to Z}
  {g\circ f_1 \eqw g\circ f_2 } \\
\hline 
\multicolumn{1}{|l|}
  {\text{\emptyrule Rules for the empty type $\empt$:} \qquad
 
\dfrac{X}{\cotu_X^\pure:\empt \to X } \qquad 
 
\dfrac{f,g:\empt \to Y}{f \eqw g} \; }  \\
\hline 
\multicolumn{1}{|l|}
  {\text{\caserule Case distinction with respect to $X+\empt$:}  } \\ 
 
\;\dfrac{g^\ppg\!:\!X\!\to\! Y \;\; k^\ctc\!:\!\empt\!\to\! Y}
  {\cotuple{g\,|\,k}^\ctc\!:\!X\to Y \quad
  \cotuple{g\,|\,k} \eqw g \quad 
  \cotuple{g\,|\,k} \circ \cotu_X \eqs k } \; \\ 
 
\dfrac{f,g : X\to Y \quad f \eqw g \quad 
  f \circ \cotu_X \eqs g \circ \cotu_X}{f\eqs g} \\
\hline 
\multicolumn{1}{|l|}
  {\text{\propagrule Propagating exceptions:}  \qquad  
 
  \dfrac{k^\ctc:X\to Y}{(\downcast{k})^\ppg:X\to Y \quad  
  \downcast{k} \eqw k}}  \\ 
\hline 
\multicolumn{1}{|l|}
{\text{\tagrule Tagging:} }  \\

\dfrac {T\in\exctype} 
  {\tagg{T}^\ppg : T\to\empt}  \qquad\qquad  
 
\dfrac {(f_T^\ppg : T\to Y)_{T\in\exctype}} 
  {\cotuple{f_T}_{T\in\exctype}^\ctc : \empt\to Y \quad   
  \cotuple{f_T}_{T\in\exctype} \circ \tagg{T} \eqw f_T } \\
 
\dfrac{f,g : \empt \to Y \quad 
  f \circ \tagg{T} \eqw g \circ \tagg{T} \mbox{ for all } T\in\exctype \; }
  {f \eqs g }  \\
\hline 
\multicolumn{1}{|l|}
{\text{\untagrule Untagging:} }  \\ 
\dfrac {T\in\exctype}{\untagd{T}{\ctc}: \empt\to T  \quad 
\untag{T}\circ\tagg{T} \eqw \id_T} \qquad 
\dfrac {R,T\in\exctype \quad R\ne T}
{\untag{T}\circ\tagg{R} \eqw \cotu_T \circ\tagg{R} }\\
\hline 
\end{array}$$
\renewcommand{\arraystretch}{1}
\caption{Decorated rules for exceptions} 
\label{fig:rules} 
\end{figure}

Rules \firstmonadrule are the usual rules for 
the monadic equational logic; they are valid for all decorated terms 
and for strong equations. 
 
Rules \seconmonadrule provide more information on 
the decorated monadic equational logic for exceptions; 
in particular, the substitution of $f$ in a weak equation 
$g_1\eqw g_2$ holds only when $f$ is pure, which is quite restrictive.  
 
Rules \emptyrule ensure that the empty type is a kind of initial
type with respect to weak equations. 
 
Rules \caserule are used for case distinctions between 
exceptional arguments (on the ``$\empt$'' side of $X+\empt$)
and non-exceptional arguments (on the ``$X$'' side of $X+\empt$). 
 
The symbol $\downcast{}$ in rules \propagrule is interpreted 
as the downcast conversion, see Definition~\ref{defi:downcast}.
 
Rules in \tagrule mean that the tagging operations are interpreted as 
the canonical inclusions of the exceptional types in the set $\Exc$, 
see Definition~\ref{defi:den-tag}. 
 
The rules in \untagrule determine the untagging operations 
up to strong equations:  
an untagging operation recovers the exception parameter 
whenever the exception type is matched, 
and it propagates the exception otherwise 
see Definition~\ref{defi:den-untag}. 

\begin{remark} 
\label{rem:dual}
It has been shown in \cite{DDFR12-dual}
that the denotational semantics of the core language for 
exceptions is dual to the denotational semantics for states: 
the \emph{tagging} and \emph{untagging} operations 
are respectively dual to the \emph{lookup} and \emph{update} operations. 
In fact, this duality is also valid for the decorated equational logics.
\end{remark}

This decorated proof system is used now (in Definition~\ref{defi:deco-handle})
for constructing the raising and handling operations 
from the core tagging and untagging operations.
It has to be noted that the term $\catchi\circ f$ may catch exceptions, 
while the handling operation, which coincides with $\catchi\circ f$
on non-exceptional values, must propagate exceptions; 
this is why the downcast operator $\downcast{}$ is used.

\begin{definition}
\label{defi:deco-raise}
For each exceptional type $T$ and each type $Y$, 
the \emph{raising} operation $ \throwd{T}{Y}{\ppg}:T\to Y$ 
is the propagator defined as:
	$$ \throwd{T}{Y}{\ppg} = \cotu_Y \circ\tagg{T} \;.$$ 
\end{definition}

\begin{definition}
\label{defi:deco-handle}

For each propagator $f^\ppg:X\to Y$, 
  each non-empty lists of types $(T_i)_{1\leq i\leq n}$ 
  and propagators $(g_i^\ppg:T_i\to Y)_{1\leq i\leq n}$, 
let $\catchi^\ctc : Y \to Y $ denote the catcher defined by: 
$$ \catchi^\ctc = \cotuple{\; \id_Y \;|\; \recov_1 \;}
$$
where $(\recov_i^\ctc: \empt \to Y)_{1\leq i\leq n}$ 
denotes the family of catchers defined recursively by: 
  $$ \recov_i^\ctc \;=\; \begin{cases} 
    g_n \circ \untag{T_n} & 
       \mbox{ when } i=n, \\
    \cotuple{g_i \;|\; \recov_{i+1} \;} \circ \untag{T_i} & 
       \mbox{ when } i< n. \\
  \end{cases} $$
Then, the \emph{handling} operation $ (\try{f}{\catchi})^\ppg :X\to Y$ 
is the propagator defined by: 
$$ \try{f}{\catchi} = \downcast{ \left(\catchi\circ f\right)}$$
\end{definition}
 
When $n=1$ we get:
$$ \try{f}{\catch{T}{g}} =  
  \downcast{\, (\catch{T}{g}\circ f)} = 
  \downcast{\,( \cotuple{\; \id_Y \;|\; g \circ \untag{T} \;}\circ f )}.
$$

Now Theorem~\ref{theo:rules-sound} derives easily, by induction, 
from Fig.~\ref{fig:rules} and~\ref{fig:expansion} and from 
Definitions~\ref{defi:deco-raise} and~\ref{defi:deco-handle}. 

\begin{theorem}\label{theo:rules-sound} 
The decorated rules for exceptions and 
the raising and handling constructions 
are sound with respect to the denotational semantics of exceptions. 
\end{theorem}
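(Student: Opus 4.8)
The plan is to prove the statement by structural induction on derivations in the system of Fig.~\ref{fig:rules}, after the semantic content of each decorated judgment has been made explicit along the lines of Fig.~\ref{fig:expansion}. First I fix the interpretation: each type $X$ is the set $\deno{X}$, each basic operation is its usual function, and, using the upcasting conversions of Definition~\ref{defi:upcast}, every decorated term $f:X\to Y$ receives a function $\deno{f}:\deno{X}+\Exc\to\deno{Y}+\Exc$ obtained by composing the (upcast) interpretations of its components; a short preliminary lemma, using that $\upcastone$, $\upcasttwo$ and $\upcastboth$ are injective, shows this is independent of where the silent coercions are inserted, so the implicit casts appearing throughout Fig.~\ref{fig:rules} are harmless. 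I then declare: a derivable $f^\pure$ is \emph{sound} iff $\deno{f}=\upcastboth{\varphi}$ for some $\varphi:\deno{X}\to\deno{Y}$; a derivable $f^\ppg$ is sound iff $\deno{f}$ propagates exceptions, i.e.\ $\deno{f}=\upcasttwo{\psi}$ for some $\psi:\deno{X}\to\deno{Y}+\Exc$; every derivable $f^\ctc$ is sound; a strong equation $f\eqs g$ is sound iff $\deno{f}=\deno{g}$, and a weak equation $f\eqw g$ iff $\deno{f}\circ\inn_{\deno{X}}=\deno{g}\circ\inn_{\deno{X}}$. Soundness of the proof system is then the claim that every derivable judgment is sound.

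For the induction I check the conclusion of every rule of Fig.~\ref{fig:rules} assuming its premises sound; each check is a one-line set-theoretic computation. Rules \firstmonadrule are just the identity, associativity and congruence laws for functions between the sets $\deno{X}+\Exc$. In \seconmonadrule the points worth noting are: $\pure\Rightarrow\ppg\Rightarrow\ctc$ and closure of $\pure$ and $\ppg$ under composition follow because a composite of exception-propagating functions propagates exceptions; the rule $\frac{f^\ppg\eqw g^\ppg}{f\eqs g}$ is precisely the observation after Definition~\ref{defi:exc} that a propagator is determined by its restriction to ordinary values; and the weak-substitution rule $\frac{f^\pure\ \ g_1\eqw g_2}{g_1\circ f\eqw g_2\circ f}$ needs $f$ pure exactly so that $f$ carries ordinary values to ordinary values, where $g_1$ and $g_2$ already coincide. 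Rules \emptyrule become vacuous once $\deno{\empt}=\emptyset$, so $\deno{\empt}+\Exc=\Exc$, and $\deno{\cotu_X}=\ina_{\deno{X}}$. Rules \caserule record that a function out of $\deno{X}+\Exc$ decomposes uniquely into its restrictions to $\deno{X}$ and to $\Exc$, with $\deno{\cotuple{g\,|\,k}}=\cotuple{\deno{g}\,|\,\deno{k}}$. Rule \propagrule is Definition~\ref{defi:downcast}. Rules \tagrule and \untagrule are read off Definitions~\ref{defi:den-tag} and~\ref{defi:den-untag} together with the decomposition $\Exc=\coprod_{T\in\exctype}\deno{T}$ along the coprojections $\deno{\tagg{T}}$.

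It remains to verify that the \emph{derived} constructions of Definitions~\ref{defi:deco-raise} and~\ref{defi:deco-handle} are interpreted as prescribed in Section~\ref{sec:den}. For raising, $\deno{\cotu_Y\circ\tagg{T}}=\ina_{\deno{Y}}\circ\deno{\tagg{T}}=\deno{\throw{T}{Y}}$ by Definition~\ref{defi:den-raise}. For handling, an auxiliary downward induction on the index $i$ shows that the syntactic $\recov_i$ of Definition~\ref{defi:deco-handle} is interpreted by the semantic $\recov_i$ of Definition~\ref{defi:den-handle}: at $i=n$, $\deno{g_n\circ\untag{T_n}}=\cotuple{\deno{g_n}\,|\,\ina_{\deno{Y}}}\circ\deno{\untag{T_n}}$ because a propagator, viewed as a catcher, is of the form $\cotuple{-\,|\,\ina}$; the step for $i<n$ combines this with the case-distinction interpretation. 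Then $\deno{\downcast{(\catchi\circ f)}}=\cotuple{\inn_{\deno{Y}}\,|\,\deno{\recov_1}}\circ\deno{f}$, which is exactly Definition~\ref{defi:den-handle}, and the $\ppg$-decoration is justified since the downcast of a function always propagates exceptions. I expect the only real friction to be bookkeeping: tracking where the silent conversions $\upcastone,\upcasttwo,\upcastboth,\downcast{}$ occur, and carrying the extra induction on $n$ through the handler; the content of each individual rule is routine.
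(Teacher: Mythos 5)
Your proposal is correct and follows exactly the route the paper intends: the paper itself only states that the theorem ``derives easily, by induction, from Fig.~\ref{fig:rules} and~\ref{fig:expansion} and from Definitions~\ref{defi:deco-raise} and~\ref{defi:deco-handle}'', and your rule-by-rule induction over derivations, together with the check that the decorated raising and handling constructions interpret to Definitions~\ref{defi:den-raise} and~\ref{defi:den-handle}, is precisely that argument spelled out. No gaps; your elaboration is in fact more detailed than the paper's own proof.
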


\subsection{A decorated proof: a propagator propagates}\label{ssec:lemma}

With these tools, 
it is now possible to prove properties of programs involving
exceptions and to check these proofs in Coq.
For instance, let us {\em prove} that given an exception, 
a propagator will do nothing apart from propagating it. 
Recall that the interpretation of $\cotu_Z$ 
(or, more precisely, of $\upcastboth{\cotu_Z}$) is 
$\ina_{\deno{Z}}:\Exc\to\deno{Z}+\Exc$.

\begin{lemma}
\label{lem:coprod-cotu} 
For each propagator $g^\ppg:X\to Y$ we have 
$g\circ \cotu_X \eqs \cotu_Y$. 
\end{lemma}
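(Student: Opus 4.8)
The plan is to reduce the statement to three ingredients of the proof system of Fig.~\ref{fig:rules}: the decoration rules for propagators, the fact (rules \emptyrule) that $\empt$ is weakly initial, and the rule of \seconmonadrule that promotes a weak equation between propagators to a strong one.

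First I would fix the decorations of the two terms being compared. By the first of the rules \emptyrule, $\cotu_X$ is pure; by the coercion rule from pure terms to propagators (in \seconmonadrule) it is a propagator $\cotu_X^\ppg:\empt\to X$. Since $g^\ppg:X\to Y$ by hypothesis, the composition rule for propagators (in \seconmonadrule) yields $(g\circ\cotu_X)^\ppg:\empt\to Y$. In the same way $\cotu_Y$ is pure, hence a propagator $\cotu_Y^\ppg:\empt\to Y$.

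Next, $g\circ\cotu_X$ and $\cotu_Y$ are both terms from $\empt$ to $Y$, so the second of the rules \emptyrule, which makes any two parallel terms out of $\empt$ weakly equal, gives $g\circ\cotu_X\eqw\cotu_Y$.

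Finally I would upgrade this weak equation to a strong one: both members are propagators, so the rule of \seconmonadrule stating that $f^\ppg\eqw g^\ppg$ entails $f\eqs g$ applies and produces $g\circ\cotu_X\eqs\cotu_Y$, which is the claim. There is essentially no difficulty here; the only point worth emphasising is why the last step is sound — a propagator is determined up to strong equality by its action on ordinary values, and the empty type has none, so over $\empt$ every propagator is strongly equal to the pure ``propagation'' term $\cotu_Y$. This lemma is merely the decorated shadow of the semantic observation recalled just above the statement, namely that $\upcastboth{\cotu_Z}$ is the inclusion $\ina_{\deno{Z}}$ and that composing it with the exception-propagating interpretation of a propagator again yields $\ina_{\deno{Y}}$.
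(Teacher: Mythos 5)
Your proposal is correct and follows exactly the paper's own derivation: establish $\cotu_X^\pure$ and hence $(g\circ\cotu_X)^\ppg$ and $\cotu_Y^\ppg$ via the \emptyrule{} and \seconmonadrule{} rules, obtain $g\circ\cotu_X\eqw\cotu_Y$ from the weak initiality of $\empt$, and promote the weak equation to a strong one with the rule $\frac{f^\ppg\eqw g^\ppg}{f\eqs g}$. No differences worth noting.
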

\begin{proof}
This lemma can be proved as follows; 
the labels refer to Fig.~\ref{fig:rules}, and their subscripts to the
proof in Coq of Fig.~\ref{fig:propagpropag}.
\small
\begin{prooftree}
\AxiomC{$X$}
\LeftLabel{$\emptyrule_1\;$} 
\UnaryInfC{$\cotu_X : \empt \to X$}
    \AxiomC{$g:X\to Y$}
  \LeftLabel{$\firstmonadrule\;$} 
  \BinaryInfC{$g\circ \cotu_X : \empt \to Y$}
  \LeftLabel{$\emptyrule_2\;$} 
  \UnaryInfC{$g\circ \cotu_X \eqw \cotu_Y$}
         \AxiomC{$g^\ppg$}    
                      \AxiomC{$X$}    
                      \LeftLabel{$\emptyrule_3\;$} 
                      \UnaryInfC{$\cotu_X^\pure$}
                      \LeftLabel{$\seconmonadrule_1\;$} 
                      \UnaryInfC{$\cotu_X^\ppg$}
                \LeftLabel{$\seconmonadrule_2\;$} 
                \BinaryInfC{$(g\circ \cotu_X)^\ppg$}
                            \AxiomC{$Y$}    
                            \LeftLabel{$\emptyrule_4\;$} 
                            \UnaryInfC{$\cotu_Y^\pure$}
                            \LeftLabel{$\seconmonadrule_3\;$} 
                            \UnaryInfC{$\cotu_Y^\ppg$}
      \LeftLabel{$\seconmonadrule_4\;$} 
      \TrinaryInfC{$g\circ \cotu_X \eqs \cotu_Y$}
\end{prooftree}
\normalsize
\end{proof}
 
The proof in Coq follows the same line as the mathematical proof above. 
It goes as in Figure~\ref{fig:propagpropag}.
The Coq library for exceptions, 
\href{https://forge.imag.fr/frs/download.php/541/EXCEPTIONS-0.1.tar.gz}{EXCEPTIONS-0.1.tar.gz}, can be
found online: \href{https://forge.imag.fr/frs/?group_id=506}{http://coqeffects.forge.imag.fr} 
(in file \texttt{Proofs.v}) 
with proofs of many other properties of programs involving exceptions. 
\begin{figure}[ht]
\begin{jgdfrsh}\input{propagpropag.v.tex}\end{jgdfrsh}
\caption{Proof in Coq that ``a propagator propagates exceptions''}
\label{fig:propagpropag}
\end{figure}

It should be recalled that any Coq proof is read from bottom up. 
Last, the application of the 
\texttt{from\_empty\_is\_weakly\_unique} rule certifies  
that the term \texttt{g o (@empty X)}, because its domain is the empty type, 
is weakly equal to the term \texttt{(@empty Y)}. 
Thus, we have the left side sub-proof: 
\texttt{g o (@empty X) $\sim$ (@empty Y)}. 
There, weak equality is converted into strong:
applying \texttt{propagator\_weakeq\_is\_strongeq} 
resolves the goal: \texttt{g o (@empty X) == (@empty Y)} and produces
as sub-goals that there is no catcher involved in both hand sides  
(middle and right side sub-proofs).
 
\subsection{A hierarchy of exceptional types}\label{ssec:hierarchy}

In object-oriented languages, exceptions are usually the objects of classes 
which are related by a hierarchy of subclasses.
Our framework can be extended in this direction by introducing 
a hierarchy of exceptional types:
the set $\exctype$ is endowed with a partial order $\subtype$ called 
the \emph{subtyping relation},
and the signature $\Sig_{\basic}$ is extended with 
a \emph{cast} operation $\cast{R}{T}:R\to T$ whenever $R \subtype T$.
 
The interpretation of $\cast{R}{T}$ is a pure function 
$\deno{\cast{R}{T}}:\deno{R}\to \deno{T}$,
such that 
$\deno{\cast{T}{T}}$ is the identity on $\deno{T}$ 
and when $S\subtype R\subtype T$ then 
$\deno{\cast{S}{T}}=\deno{\cast{R}{T}}\circ\deno{\cast{S}{R}}$.
 
Definition~\ref{defi:den-untag} has to be modified as follows: 
the function $ \deno{\untag{T}}:\Exc\to \deno{T}+\Exc $ 
satisfies for each exceptional type $R$: 
$$
\deno{\untag{T}}\circ\deno{\tagg{R}} = 
\begin{cases}
\inn_{\deno{T}}\circ\deno{\cast{R}{T}} & \text{~when~} R\subtype T \\
\ina_{\deno{T}}\circ\deno{\tagg{R}} & \text{~otherwise}  \\ 
\end{cases}
  \;\; : \deno{R}\to \deno{T}+\Exc. 
$$

\section{Conclusion}
\label{sec:conc}

Exceptions are part of most modern programming languages 
and they are useful in computer algebra, 
typically for implementing dynamic evaluation. 
 
We have presented a new framework for formalizing the treatment of
exceptions. 
These decorations form a bridge between the syntax and the denotational
semantics by turning the syntax sound with respect to the semantics, 
without adding any explicit ``type of exceptions''
as a possible return type for the operations. 
 
The salient features of our approach are:
\begin{itemize}
\item We provide rules for equational proofs on programs 
involving exceptions (Fig.~\ref{fig:rules}) and 
an automatic process for interpreting these proofs (Fig.~\ref{fig:expansion}). 
\item Decorating the equations allows to separate properties 
that are true only up to effects (weak equations) from 
properties that are true even when effects are considered (strong equations).

\item Moreover,
the verification of the proofs can be done in two steps: in a first
step, decorations are dropped and the proof is checked syntactically; in a
second step, the decorations are taken into account in order to prove
properties involving computational effects. 
\item The distinction between the 
language for exceptions and its associated private core language
(Definitions~\ref{defi:syn-sig} and~\ref{defi:syn-core})
allows to split the proofs in two successive parts;
in addition, 
the private part can be directly dualized from the proofs on global states
(relying on \cite{DDFR12-dual} and \cite{DDFR12-state}).

\item A proof assistant can be used for checking the decorated proofs on
exceptions. Indeed the decorated proof system for states, as described
in~\cite{DDFR12-state} has been implemented in Coq~\cite{DDEP13-coq}
and dualized for exceptions (see \url{http://coqeffects.forge.imag.fr}). 
 
\end{itemize}

We have used the approach of decorated logic, which provides 
rules for computational effects by adding decorations to 
usual rules for ``pure'' programs. 
This approach can be extended in order to deal with 
multivariate operations~\cite{DDR11-seqprod}, 
conditionals, loops, and high-order languages.

\end{document}